\newtheorem{theorem}{Theorem}
\newtheorem{lemma}[theorem]{Lemma}
\newtheorem{remark}[theorem]{Remark}
\DeclareMathOperator{\var}{var}
\DeclareMathOperator{\vect}{vec}
\DeclareMathOperator{\cov}{cov}
\DeclareMathOperator{\Ber}{Ber}
\begin{document}

\title{{\LARGE \textbf{On Asymptotic Consensus Value in Directed Random
Networks}}}
\author{Victor M. Preciado, Alireza Tahbaz-Salehi, and Ali Jadbabaie \thanks{%
This research is supported in parts by the following grants: DARPA/DSO
SToMP, NSF ECS-0347285, ONR MURI N000140810747, and AFOSR: Complex Networks
Program.} \thanks{%
Victor Preciado is with Department of Electrical and Systems Engineering,
University of Pennsylvania, Philadelphia, PA 19104 (e-mail:
preciado@seas.upenn.edu)} \thanks{%
Alireza Tahbaz-Salehi is with Department of Electrical Engineering and
Computer Science, Massachusetts Institute of Technology, Cambridge, MA 02139
(e-mail: alirezat@mit.edu).} \thanks{%
Ali Jadbabaie is with the General Robotics, Automation, Sensing and
Perception (GRASP) Laboratory, Department of Electrical and Systems
Engineering, University of Pennsylvania, Philadelphia, PA 19104 (e-mail:
jadbabai@seas.upenn.edu).}}
\maketitle

\begin{abstract}
We study the asymptotic properties of distributed consensus algorithms over
switching directed random networks. More specifically, we focus on consensus
algorithms over independent and identically distributed, directed random
graphs, where each agent can communicate with any other agent with some
exogenously specified probability. While different aspects of consensus
algorithms over random switching networks have been widely studied, a
complete characterization of the distribution of the asymptotic value for
general \textit{asymmetric} random consensus algorithms remains an open
problem. In this paper, we derive closed-form expressions for the mean and
an upper bound for the variance of the asymptotic consensus value, when the
underlying network evolves according to an i.i.d. \textit{directed} random
graph process. We also provide numerical simulations that illustrate our
results.
\end{abstract}

\thispagestyle{empty} \pagestyle{empty}





\section{Introduction}

Distributed consensus algorithms have attracted a significant amount of
attention in the past few years. Besides their wide range of applications in
distributed and parallel computation \cite{TsitsiklisThesis}, distributed
control \cite{Jad2003},\cite{Olfati2007} and robotics \cite{Bullo2005}, they
have also been used as models of opinion dynamics and belief formation in
social networks \cite{DeGroot1974, Jackson_Naive}. The central focus in this
vast body of literature is to study whether a group of agents in a network,
with \textit{local} communication capabilities can reach a \textit{global}
agreement, using simple, deterministic information exchange protocols.%
\footnote{%
For a survey on the most recent works in this area see \cite{Olfati2007}.}

In recent years, there has also been some interest in understanding the
behavior of consensus algorithms in random settings \cite%
{Hatano2004,WuRandomConsensus,Porfiri_TAC,Tahbaz_Jad_TAC, Picci_Taylor_CDC,
Asu_misinformation}. The randomness can be either due to the choice of a
randomized network communication protocol or simply caused by the potential
unpredictability of the environment in which the distributed consensus
algorithm is implemented \cite{Zampieri}. It is recently shown that
consensus algorithms over i.i.d. random networks lead to a global agreement
on a possibly random value, as long as the network is connected in
expectation \cite{Tahbaz_Jad_TAC}. While different aspects of consensus
algorithms over random switching networks, such as conditions for
convergence \cite{Hatano2004, WuRandomConsensus, Porfiri_TAC, Tahbaz_Jad_TAC}
and the speed of convergence \cite{Zampieri}, have been widely studied, a
characterization of the distribution of the asymptotic consensus value has
attracted little attention. Two notable exceptions are Boyd \textit{et al.} 
\cite{Boyd2005b}, who study the asymptotic behavior of the random consensus
value in the special case of symmetric networks, and Tahbaz-Salehi and
Jadbabaie \cite{Tahbaz_Jad_TAC_2}, who compute the mean and variance of the
consensus value for general i.i.d. graph processes. Nevertheless, a complete
characterization of the distribution of the asymptotic value for general 
\textit{asymmetric} random consensus algorithms remains an open problem.

In this paper, we study asymptotic properties of consensus algorithms over a
general class of switching, directed random graphs. More specifically,
building on the results of \cite{Tahbaz_Jad_TAC_2}, we derive closed-form
expressions for the mean and an upper-bound for the variance of the
asymptotic consensus value, when the underlying network evolves according to
an i.i.d. \textit{directed} random graph process. In our model, at each time
period, a directed communication link is established between two agents with
some exogenously specified probability. Due to the potential asymmetry in
pairwise communications between different agents, the asymptotic value of
consensus is not guaranteed to be the average of the initial conditions.
Instead, agents will asymptotically agree on some random value in the convex
hull of the initial conditions. Furthermore, our closed-form
characterization of the variance provides a quantitative measure of how
dispersed the random agreement point is around the average of the initial
conditions in terms of the fundamentals of the model, namely, the structure
of the network, the exogenous probabilities of communication, and the
initial conditions.

The rest of the paper is organized as follows. In the next section, we
describe our model of random consensus algorithms. 
In Sections \ref{Section Mean Analysis} and \ref{Section_Quad_Analysis}, we
derive an explicit expression for the mean and an upper bound for the
variance of the limiting consensus value over switching directed random
graphs, respectively. Section V contains simulations of our results and
Section VI concludes the paper.


\section{Consensus Over Switching Directed Random Graphs}

\label{Section_Model}

Consider the discrete-time linear dynamical system 
\begin{equation}
\mathbf{x}\left( k\right) =W_{k}\mathbf{x}\left( k-1\right) ,
\label{consensus_update}
\end{equation}%
where $k\in \left\{ 1,2,\dots \right\} $ is the discrete time index, $%
\mathbf{x}(k)\in \mathbb{R}^{n}$ is the state vector at time $k$, and $%
\{W_{k}\}_{k=1}^{\infty }$ is a sequence of stochastic matrices. We
interpret (\ref{consensus_update}) as a distributed scheme where a
collection of agents, labeled 1 through $n$, update their state values as a
convex combination of the state values of their neighbors at the previous
time step. Given this interpretation, $\mathbf{x}_{i}(k)$ corresponds to the
state value of agent $i$ at time $k$, and $W_{k}$ captures the neighborhood
relation between different agents at time $k$: the $ij$ element of $W_{k}$
is positive only if agent $i$ has access to the state of agent $j$. For the
remainder of the paper, we assume that the weight matrices $W_{k}$ are
randomly generated by an independent and identically distributed matrix
process.

We say that the dynamical system (\ref{consensus_update}) reaches \textit{%
consensus} asymptotically on some path $\{W_{k}\}_{k=1}^{\infty }$, if along
that path, there exists $x^{\ast }\in \mathbb{R}$ such that $%
x_{i}(k)\rightarrow x^{\ast }$ for all $i$ as $k\rightarrow \infty $. We
refer to $x^{\ast }$ as the \emph{asymptotic consensus value}. It is
well-known that for i.i.d. random networks, the dynamical system (\ref%
{consensus_update}) reaches consensus on almost all paths if and only if the
graph corresponding to the communications between agents is connected in
expectation. More precisely, it was shown in \cite{Tahbaz_Jad_TAC} that $%
W_{k}\dots W_{2}W_{1}\longrightarrow \mathbf{1}d^{T}$ almost surely $-$
where $d$ is some random vector $-$ if and only if the second largest
eigenvalue modulus of $\mathbb{E}W_{k}$ is subunit. Clearly, under such
conditions, the dynamical system in (\ref{consensus_update}) reaches
consensus with probability one where the consensus value is a random
variable equal to $x^{\ast }=d^{T}\mathbf{x}(0)$, where $\mathbf{x}(0)$ is
the vector of initial conditions.

A complete characterization of the random consensus value $x^{\ast }$ is an
open problem. However, it is possible to compute its mean and variance in
terms of the first two moments of the i.i.d. weight matrix process. In \cite%
{Tahbaz_Jad_TAC_2}, the authors prove that the conditional mean of the
random consensus value is given by the random consensus value are given by 
\begin{equation}
\mathbb{E}x^{\ast }=\mathbf{x}(0)^{T}\mathbf{v}_{1}(\mathbb{E}W_{k}),
\label{mean expression}
\end{equation}%
and its conditional variance is equal to 
\begin{eqnarray}
\lefteqn{\var(x^{\ast })=\left[ \mathbf{x}(0)\otimes \mathbf{x}(0)\right]
^{T}\vect(\cov(d))}  \label{variance_expression} \\
&&\!\!\!\!\!=\left[ \mathbf{x}(0)\otimes \mathbf{x}(0)\right] ^{T}\mathbf{v}%
_{1}(\mathbb{E}\left[ W_{k}\otimes W_{k}\right] )-[\mathbf{x}(0)^{T}\mathbf{v%
}_{1}(\mathbb{E}W_{k})]^{2}  \notag
\end{eqnarray}%
where $\mathbf{v}_{1}\left( \cdot \right) $ denotes the normalized left
eigenvector corresponding to the unit eigenvalue, and $\otimes $ denotes the
Kronecker product. In \cite{PTJ10}, the authors derive an explicit
expression for the mean and variance of $x^{\ast }$ in the particular case
of a switching Erd\"{o}s-R\'{e}nyi random graph process. In the following,
we shall use (\ref{variance_expression}) to extend these results and derive
an explicit expression for the mean and an upper-bound for the variance of
the asymptotic consensus value over a wider class of switching, directed
random graph processes.




\section{Mean Analysis for Directed Random Graphs Processes}

\label{Section Mean Analysis}

\subsection{Directed Random Graph Process}

Consider a connected undirected graph $G_{c}=\left( V,E_{c}\right) $ with a
fixed set of vertices $V=\left[ n\right] ,$ and unweighted edges (no
self-loops allowed). Each undirected edge in $E_{c}$ represents a potential
communication channel between nodes $i$ and $j,$ where this channel can be
used to send information in both directions. In this paper, we focus on
directed communications, i.e., the event of node $i$ sending information
towards node $j$ is independent from the even of node $j$ sending
information towards $i$. In this context, it is convenient to interpret an
undirected edge $\left\{ i,j\right\} \in E_{c}$ as the union of two
independent directed edges, $\left\{ \left( i,j\right) ,\left( j,i\right)
\right\} $, where the ordered pair $\left( i,j\right) $ represents a
directed link from node $i$ to node $j$.

In this paper, we study randomized time-switching consensus processes. In
particular, in each discrete time slot $k\geq 1$, we construct a random 
\textit{directed} graph $\mathcal{G}_{k}=\left( V,\mathcal{E}_{k}\right) $,
with $\mathcal{E}_{k}\subseteq E_{c}$, such that the existence of a directed
edge $\left( u,v\right) \in \mathcal{E}_{k}$ is determined randomly and
independently of all other directed edges (including the reciprocal edge $%
\left( v,u\right) $) with a probability $p_{uv}\in \left( 0,1\right) $ for $%
\left( u,v\right) \in E_{c}$, and $p_{uv}=0$ for $\left( u,v\right) \not\in
E_{p}$. In other words, in each time slot, we randomly select a subset $%
\mathcal{E}_{k}$ of directed links chosen from a set of candidate (directed)
links in $E_{c}$. We are specially interested in the case in which the
probability $p_{uv}$ of existence of a directed link $\left( u,v\right) $
depends exclusively on the node that receives information via that link,
i.e., $\Pr \left( \left( u,v\right) \in \mathcal{E}_{k}\right) =p_{v}$,
where $p_{v}\in \left( 0,1\right) $. In this setting, we can model the
ability of a node to `listen' to their neighboring nodes. For example, in
the context of opinion dynamics in social networks \cite{DeGroot1974,
Jackson_Naive},\cite{Asu_misinformation}, the probability $p_{v}$ can
represent the tendency of the individual at node $v$ to take into account
the opinion of her neighbors (which could depend, for example, on how many
acquaintances the individual has).

Let us denote by $A_{c}$ the symmetric adjacency matrix of the graph $G_{c}$%
, where entries $a_{ij}=1\ $if $\left\{ i,j\right\} \in E_{c}$, and $%
a_{ij}=0\ $otherwise. We define the degree of node $i$ as $%
d_{i}=\sum_{j=1}^{n}a_{ji}$, and the associated degree matrix as $%
D_{c}=diag(d_{i})$. We also denote the random (nonsymmetric) adjacency
matrix associated with $\mathcal{G}_{k}$ as $\tilde{A}_{k}=\left[ \tilde{a}%
_{uv}^{\left( k\right) }\right] $, which can be described as%
\begin{equation}
\tilde{a}_{uv}^{\left( k\right) }=\left\{ 
\begin{array}{ll}
a_{uv}, & \text{w.p. }p_{v}, \\ 
0, & \text{w.p. }1-p_{v}.%
\end{array}%
\right.  \label{Random Adjacency}
\end{equation}%
We denote the in-degrees of $\mathcal{G}_{k}$ as $\tilde{d}_{v}^{\left(
k\right) }=\sum_{u}\tilde{a}_{uv}^{\left( k\right) }$, and the in-degree
matrix as $\tilde{D}_{k}=diag(\tilde{d}_{v}^{\left( k\right) })$. From the
definition of $\mathcal{G}_{k}$, the in-degrees are independent Bernoulli
random variables $\tilde{d}_{v}^{\left( k\right) }\sim \Ber\left(
d_{v},p_{v}\right) $, i.e.,%
\begin{equation*}
\Pr \left( \tilde{d}_{v}^{\left( k\right) }=d\right) =\binom{d_{v}}{d}%
p_{v}^{d}\left( 1-p_{v}\right) ^{d_{v}-d}.
\end{equation*}

We describe the consensus dynamics in (\ref{consensus_update}) via a
sequence of stochastic matrices $\{W_{k}\}_{k=1}^{\infty }$ associated to
the sequence of random directed graphs $\left\{ \mathcal{G}_{k}=\left( V,%
\mathcal{E}_{k}\right) \right\} _{k=1}^{\infty }$ as follows:%
\begin{equation}
W_{k}=\left( \tilde{D}_{k}+I\right) ^{-1}\left( \tilde{A}_{k}+I\right) ^{T}%
\text{, for }k\geq 1.  \label{Stochastic Matrix}
\end{equation}%
Notice that adding the identity matrix to the adjacency in (\ref{Stochastic
Matrix}) is equivalent to adding a self-loop to every vertex in $V$. We
include these self-loops to avoid singularities associated with the
existence of isolated nodes in $\mathcal{G}_{k}$ (for which $\tilde{d}_{i}=0$%
, and $\tilde{D}_{k}$ would not invertible). Since $G_{c}$ is connected and $%
p_{v}>0$ for all $v\in V$, the expected communications graph is connected
and the stochastic dynamical system in (\ref{consensus_update}) reaches
consensus on almost all paths, although the asymptotic consensus value $%
x^{\ast }$ is a random variable (not the initial average).

In the following subsections, we first derive closed-form expressions for
the mean of $x^{\ast }$, and an upper-bound for the variance $var\left(
x^{\ast }\right) $ in terms of the following elements:

\begin{enumerate}
\item The set of initial conditions, $\left\{ x_{u}\left( 0\right) \right\}
_{u\in V}$,

\item the set of nodes properties, $\left\{ \left( p_{u},d_{u}\right)
\right\} _{u\in V}$, and

\item the network topology, via the eigenvalues of the expected matrix $%
\mathbb{E}W_{k}$.
\end{enumerate}

As we shall show in Section \ref{Section_Quad_Analysis}, our expression for
the variance has a nice interpretation, since it separates the influence of
each one of the above elements into three multiplicative terms. In the next
subsection we provide the details regarding our analysis of the expectation
of $x^{\ast }$.

\bigskip

\subsection{\label{Start Computations}Mean of Consensus Value}

We use (\ref{mean expression}) to study the mean of the consensus value. We
first derive an expression for $\mathbb{E}W_{k},$ and then study its
dominant left eigenvector $\mathbf{v}_{1}\left( \mathbb{E}W_{k}\right) $.
For notational convenience, we define the random variable $z_{i}\triangleq
1/\left( \tilde{d}_{i}+1\right) $ where $\tilde{d}_{i}\sim \Ber\left(
d_{i},p_{i}\right) $, and denote its first and second moments as $%
M_{i}^{\left( 1\right) }\triangleq \mathbb{E}\left( z_{i}\right) $ and $%
M_{i}^{\left( 2\right) }\triangleq \mathbb{E}\left( z_{i}^{2}\right) $. The
diagonal entries of $\mathbb{E}W_{k}$ are then given by $\mathbb{E}w_{ii}=%
\mathbb{E}\left[ 1/\left( \tilde{d}_{i}+1\right) \right] $, which present
the following explicit expression (see Appendix I for details):%
\begin{equation}
\mathbb{E}w_{ii}=M_{i}^{\left( 1\right) }=\frac{1-q_{i}^{d_{i}+1}}{%
p_{i}\left( d_{i}+1\right) }.  \label{diagonal of EWk}
\end{equation}%
where $q_{i}=1-p_{i}$. Furthermore, the off-diagonal entries of $\mathbb{E}%
W_{k}$ are equal to (see Appendix I for details):%
\begin{align}
\mathbb{E}w_{ij}& =a_{ji}\frac{q_{i}^{d_{i}+1}+p_{i}\left( d_{i}+1\right) -1%
}{p_{i}\left( d_{i}+1\right) d_{i}}  \notag \\
& =a_{ji}\frac{1-M_{i}^{\left( 1\right) }}{d_{i}}.
\label{off-diagonal of EWk}
\end{align}%
Taking (\ref{diagonal of EWk}) and (\ref{off-diagonal of EWk}) into account,
we can write $\mathbb{E}W_{k}$ as follows:%
\begin{equation}
\mathbb{E}W_{k}=\Sigma +\left( I-\Sigma \right) D_{c}^{-1}A_{c}^{T},
\label{expression for EWk}
\end{equation}%
where $\Sigma \triangleq diag\left[ M_{i}^{\left( 1\right) }\right] $. As
expected, it is easy to check that $\mathbb{E}W_{k}$ is a stochastic matrix,
i.e., $\left( \mathbb{E}W_{k}\right) \mathbf{1}_{n}=\Sigma \mathbf{1}%
_{n}+\left( I-\Sigma \right) D_{c}^{-1}A_{c}^{T}\mathbf{1}_{n}=\mathbf{1}%
_{n} $.

Based on (\ref{expression for EWk}) we can write $\mathbb{E}\left( x^{\ast
}\right) $ explicitly in terms of $d_{i}$ and $p_{i}$, as follows:

\begin{theorem}
\label{Asymptotic Mean}Consider the random adjacency matrix $\tilde{A}_{k}$
in (\ref{Random Adjacency}) and the associated (random) stochastic matrix $%
W_{k}$ in (\ref{Stochastic Matrix}). The expectation of the asymptotic
consensus value of (\ref{consensus_update}) is given by 
\begin{equation}
\mathbb{E}\left( x^{\ast }\right) =\sum_{i=1}^{n}\rho w_{i}~x_{i}\left(
0\right) \mathbf{,}  \label{Explicit Expectation}
\end{equation}%
where%
\begin{eqnarray*}
w_{i}\left( p_{i},d_{i}\right) &\triangleq &\frac{p_{i}\left( d_{i}+1\right)
d_{i}}{p_{i}\left( d_{i}+1\right) -1-q_{i}^{d_{i}+1}},\text{ and} \\
\rho \left( p_{i},d_{i}\right) &\triangleq &\left( \sum_{i}w_{i}\left(
p_{i},d_{i}\right) \right) ^{-1}.
\end{eqnarray*}
\end{theorem}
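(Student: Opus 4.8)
The plan is to exhibit the normalized unit-eigenvalue left eigenvector $\mathbf{v}_{1}(\mathbb{E}W_{k})$ explicitly and then substitute it into the mean formula (\ref{mean expression}). Writing $P\triangleq D_{c}^{-1}A_{c}^{T}$, the identity (\ref{expression for EWk}) reads $\mathbb{E}W_{k}=\Sigma +\left( I-\Sigma \right) P$, where $P$ is itself row-stochastic: $P\mathbf{1}_{n}=D_{c}^{-1}A_{c}^{T}\mathbf{1}_{n}=D_{c}^{-1}A_{c}\mathbf{1}_{n}=\mathbf{1}_{n}$, using that $A_{c}$ is symmetric and $A_{c}\mathbf{1}_{n}=D_{c}\mathbf{1}_{n}$. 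First I would observe that $P$ is exactly the transition matrix of a simple random walk on $G_{c}$, whose stationary distribution is $\pi _{i}\propto d_{i}$; indeed $\left( \pi ^{T}P\right) _{j}=\sum_{i}d_{i}\,d_{i}^{-1}a_{ji}=\sum_{i}a_{ji}=d_{j}=\pi _{j}$.

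The key algebraic step is to reduce the eigenproblem for $\mathbb{E}W_{k}$ to that of $P$. A row vector $\mu ^{T}$ satisfies $\mu ^{T}\mathbb{E}W_{k}=\mu ^{T}$ if and only if $\mu ^{T}\left( I-\Sigma \right) P=\mu ^{T}\left( I-\Sigma \right) $, i.e.\ if and only if $\nu ^{T}\triangleq \mu ^{T}\left( I-\Sigma \right) $ is a left eigenvector of $P$ for eigenvalue $1$. By the previous paragraph $\nu \propto \pi $, so $\mu _{i}\left( 1-M_{i}^{\left( 1\right) }\right) \propto d_{i}$, giving
\begin{equation*}
\mu _{i}\propto \frac{d_{i}}{1-M_{i}^{\left( 1\right) }}=w_{i}\left( p_{i},d_{i}\right) ,
\end{equation*}
where the last identity follows by inserting the explicit value of $M_{i}^{\left( 1\right) }$ from (\ref{diagonal of EWk}) (equivalently, by reading off $\left( 1-M_{i}^{\left( 1\right) }\right) /d_{i}$ from (\ref{off-diagonal of EWk})). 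Normalizing so that $\sum_{i}(\mathbf{v}_{1})_{i}=1$ — the correct normalization, since $x^{\ast }=d^{T}\mathbf{x}(0)$ with $d$ a probability vector — yields $(\mathbf{v}_{1})_{i}=w_{i}/\sum_{j}w_{j}=\rho \,w_{i}$, and substituting into (\ref{mean expression}) produces (\ref{Explicit Expectation}).

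Two points require care. First, I must check that $1-M_{i}^{\left( 1\right) }>0$, so that $w_{i}$ is well defined and positive; this holds because $M_{i}^{\left( 1\right) }=\mathbb{E}\left[ 1/\left( \tilde{d}_{i}+1\right) \right] <1$ whenever $d_{i}\geq 1$, which is guaranteed by the connectedness of $G_{c}$. Second, I must invoke uniqueness: since the expected communication graph is connected and $p_{v}\in \left( 0,1\right) $, the stochastic matrix $\mathbb{E}W_{k}$ is irreducible, hence its unit eigenvalue is simple and the left eigenvector is unique up to scaling, so the $\mathbf{v}_{1}$ constructed above is the one entering (\ref{mean expression}). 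The main obstacle is the conceptual reduction itself — recognizing that the holding-probability structure $\Sigma +\left( I-\Sigma \right) P$ lets one factor out $\left( I-\Sigma \right) $ and pass to the stationary distribution of the underlying graph random walk; once this reduction is in hand, the remainder is a routine substitution using (\ref{diagonal of EWk}).
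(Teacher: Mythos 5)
Your proposal is correct and follows essentially the same route as the paper's own proof: rewrite the left eigenvalue equation for $\mathbb{E}W_{k}=\Sigma +(I-\Sigma )D_{c}^{-1}A_{c}^{T}$ so that $(I-\Sigma )$ factors out, identify the resulting equation as the stationarity condition for the random walk $D_{c}^{-1}A_{c}^{T}$ with stationary vector proportional to $\mathbf{d}$, and then normalize and substitute into (\ref{mean expression}). The additional checks you include (positivity of $1-M_{i}^{(1)}$ and simplicity of the unit eigenvalue via irreducibility) are sensible refinements the paper leaves implicit, but they do not change the argument.
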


\begin{proof}
Our proof is based on computing $v_{1}\left( \mathbb{E}W_{k}\right) $ and
applying (\ref{mean expression}). Let us define $\mathbf{v}\triangleq
v_{1}\left( \mathbb{E}W_{k}\right) $ and $\mathbf{w}\triangleq \left(
I-\Sigma \right) \mathbf{v}$. From (\ref{expression for EWk}), we have that
the eigenvalue equation corresponding to the dominant left eigenvector of $%
\mathbb{E}W_{k}$ is $\mathbf{v}^{T}\left( \Sigma +\left( I-\Sigma \right)
D_{c}^{-1}A_{c}^{T}\right) =\mathbf{v}^{T}$, which can be rewritten as $%
\mathbf{v}^{T}\left( I-\Sigma \right) D_{c}^{-1}A_{c}^{T}=\mathbf{v}%
^{T}\left( I-\Sigma \right) $. This last equation can be written as $\mathbf{%
w}^{T}D_{c}^{-1}A_{c}^{T}=\mathbf{w}^{T}$. The solution to this equation is
the stationary distribution of the Markov chain with transition matrix $%
D_{c}^{-1}A_{c}^{T}$, which is equal to $\pi =\mathbf{d}/\sum_{i}d_{i}$,
where $\mathbf{d}=\left( d_{1},...,d_{n}\right) ^{T}$. Hence, the solution
to the eigenvalue equation is%
\begin{equation*}
v_{1}\left( \mathbb{E}W_{k}\right) =\sigma \left( I-\Sigma \right) ^{-1}%
\mathbf{d,}
\end{equation*}%
where we include the normalizing parameter 
\begin{equation*}
\sigma =\left( \sum_{i}\frac{d_{i}}{1-M_{i}^{\left( 1\right) }}\right) ^{-1}
\end{equation*}%
such that $\left\Vert v_{1}\left( \mathbb{E}W_{k}\right) \right\Vert _{1}=1$%
. Hence, from (\ref{mean expression}) we have%
\begin{equation*}
\mathbb{E}\left( x^{\ast }\right) =\sum_{i=1}^{n}\sigma \frac{d_{i}}{%
1-M_{i}^{\left( 1\right) }}x_{i}\left( 0\right) .
\end{equation*}%
Substituting the expression for $M_{i}^{\left( 1\right) }$ in (\ref{diagonal
of EWk}), we reach (\ref{Explicit Expectation}) via simple algebraic
simplifications.
\end{proof}

\bigskip

In general, the asymptotic mean $\mathbb{E}\left( x^{\ast }\right) $ does
not coincide with the initial average $\bar{x}_{0}=\frac{1}{n}%
\sum_{i}x_{i}\left( 0\right) $. There is a simple technique, based on
Theorem \ref{Asymptotic Mean}, that allows us to make the expected consensus
value to be equal to the initial average. This technique consists of using $%
y_{i}\left( 0\right) =\left( \rho w_{i}\right) ^{-1}x_{i}\left( 0\right) $
as initial conditions in (\ref{consensus_update}). Hence, one can easily
check that the asymptotic consensus value $\mathbb{E}\left( y^{\ast }\right) 
$ equals the initial average $\bar{x}_{0}$.

\bigskip

\section{Variance of the Asymptotic Consensus Value}

\label{Section_Quad_Analysis}

In this section, we derive an expression that explicitly relates the
variance $var\left( x^{\ast }\right) $ with the three elements that
influences it, namely, the set of initial conditions $\left\{ x_{u}\left(
0\right) \right\} _{u\in V}$, the nodes properties $\left\{ \left(
p_{u},d_{u}\right) \right\} _{u\in V}$, and the network structure (via the
eigenvalues of the expected matrix $\mathbb{E}W_{k}$). For simplicity in
notation, we denote $\mathbb{E}\left( W_{k}\otimes W_{k}\right) $ and $%
\left( \mathbb{E}W_{k}\otimes \mathbb{E}W_{k}\right) $ by $R$ and $Q$,
respectively. Our analysis starts in expression (\ref{variance_expression}),
which can be rewritten as%
\begin{equation*}
var\left( x^{\ast }\right) =\left[ \mathbf{x}(0)\otimes \mathbf{x}(0)\right]
^{T}\left[ v_{1}\left( R\right) -v_{1}\left( Q\right) \right] .
\end{equation*}
Hence, we can upper-bound the variance of the asymptotic consensus value as
follows:%
\begin{equation}
var\left( x^{\ast }\right) \leq \left\Vert \mathbf{x}(0)\otimes \mathbf{x}%
(0)\right\Vert _{1}\left\Vert v_{1}\left( R\right) -v_{1}\left( Q\right)
\right\Vert _{\infty }.  \label{variance bound}
\end{equation}%
From the rules of Kronecker multiplication, we can write the first factor in
terms of the initial conditions as%
\begin{equation}
\left\Vert \mathbf{x}(0)\otimes \mathbf{x}(0)\right\Vert _{1}=\sum_{1\leq
i,j\leq n}\left\vert x_{i}\left( 0\right) x_{j}\left( 0\right) \right\vert .
\label{Initial Conditions Term}
\end{equation}%
In the following, we derive an upper bound for the second factor $\left\Vert
v_{1}\left( R\right) -v_{1}\left( Q\right) \right\Vert _{\infty }$ in terms
of the nodes properties and the network structure. Our approach to bound $%
\left\Vert v_{1}\left( R\right) -v_{1}\left( Q\right) \right\Vert _{\infty }$
is based on the observation that both $R$ and $Q$ are $n^{2}\times n^{2}$
stochastic matrices, and the dominant left eigenvectors $\mathbf{v}%
_{1}\left( R\right) $ and $\mathbf{v}_{1}\left( Q\right) $ are stationary
distributions of the Markov chains with transition matrices $R$ and $Q$. We
denote these distributions by $\mathbf{v}_{1}\left( R\right) \triangleq 
\tilde{\pi}$ and $\mathbf{v}_{1}\left( Q\right) \triangleq \pi $,
respectively. In this setting, we can apply the following lemma from \cite%
{FM86} which studies the sensitivity of the stationary distribution of
Markov chains:

\begin{lemma}
\label{Cho-Meyer}Consider two Markov chains with transition matrices $Q$ and 
$R$, and stationary distributions $\pi $ and $\tilde{\pi}$, respectively. We
define $G=I-Q,$ and denote its pseudoinverse by $G^{\dag }=[g_{ij}^{\dag }]$%
. Hence,%
\begin{equation}
\left\Vert \tilde{\pi}-\pi \right\Vert _{\infty }\leq \kappa _{s}\left\Vert
R-Q\right\Vert _{\infty },  \label{Cho-Meyer Bound}
\end{equation}%
where $\kappa _{s}=\max_{i,j}\left\vert g_{ij}^{\dag }\right\vert $ is
called the condition number of the chain described by the transition matrix $%
Q$.
\end{lemma}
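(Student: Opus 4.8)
The plan is to treat this inequality as a first-order \emph{perturbation estimate} for the stationary vector of an ergodic chain, following the group-inverse technique underlying \cite{FM86}. Write $E\triangleq R-Q$ for the perturbation, and recall that $\pi$ and $\tilde{\pi}$ are the left Perron vectors normalized so that $\pi^{T}\mathbf{1}=\tilde{\pi}^{T}\mathbf{1}=1$. The defining relations $\pi^{T}Q=\pi^{T}$ and $\tilde{\pi}^{T}R=\tilde{\pi}^{T}$ translate into $\pi^{T}G=0$ and, since $I-R=G-E$, into $\tilde{\pi}^{T}G=\tilde{\pi}^{T}E$. Subtracting the first identity from the second yields the fundamental relation $(\tilde{\pi}-\pi)^{T}G=\tilde{\pi}^{T}E$, which isolates the deviation $\tilde{\pi}-\pi$ on the left in terms of the known perturbation on the right.

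The second step is to invert $G$ on the relevant subspace. Here I would use the key property of the generalized inverse $G^{\dag}$ for an ergodic chain, namely $GG^{\dag}=G^{\dag}G=I-\mathbf{1}\pi^{T}$, the spectral projection onto the complement of the one-dimensional null spaces spanned by $\pi$ (on the left) and $\mathbf{1}$ (on the right). Right-multiplying the fundamental relation by $G^{\dag}$ gives $(\tilde{\pi}-\pi)^{T}(I-\mathbf{1}\pi^{T})=\tilde{\pi}^{T}EG^{\dag}$. Because both stationary vectors are probability distributions, $(\tilde{\pi}-\pi)^{T}\mathbf{1}=1-1=0$, so the projector acts as the identity on $(\tilde{\pi}-\pi)^{T}$ and the left-hand side collapses. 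This produces the clean closed form $(\tilde{\pi}-\pi)^{T}=\tilde{\pi}^{T}(R-Q)G^{\dag}$.

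The final step is an entrywise estimate. Fixing a column $j$ and expanding, $(\tilde{\pi}-\pi)_{j}=\sum_{i}\tilde{\pi}_{i}\sum_{k}(R-Q)_{ik}\,g_{kj}^{\dag}$; the triangle inequality, the bound $|g_{kj}^{\dag}|\leq\kappa_{s}$, the fact that each row of $R-Q$ has absolute sum at most $\|R-Q\|_{\infty}$, and the normalization $\sum_{i}\tilde{\pi}_{i}=1$ (with $\tilde{\pi}_{i}\geq0$) combine to give $|(\tilde{\pi}-\pi)_{j}|\leq\kappa_{s}\|R-Q\|_{\infty}$ uniformly in $j$, which is exactly (\ref{Cho-Meyer Bound}). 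I expect the main obstacle to be the second step: one must justify that the relevant generalized inverse is the \emph{group} inverse, so that the projection identity $GG^{\dag}=I-\mathbf{1}\pi^{T}$ holds with $\mathbf{1}$ and $\pi$ sitting in the correct null spaces, rather than the Moore--Penrose pseudoinverse, for which this identity fails in general. Establishing this requires invoking the ergodicity of $Q$ to guarantee that $G=I-Q$ has a one-dimensional null space and a well-defined group inverse, after which the two commuting projection properties used above are automatic.
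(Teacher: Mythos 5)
Your argument is correct: the chain of identities $\pi^{T}G=0$, $\tilde{\pi}^{T}G=\tilde{\pi}^{T}(R-Q)$, hence $(\tilde{\pi}-\pi)^{T}=\tilde{\pi}^{T}(R-Q)G^{\#}$ via the group inverse and the projector $GG^{\#}=I-\mathbf{1}\pi^{T}$, followed by the entrywise triangle-inequality estimate, is exactly the standard derivation of this bound, and your closing caveat is the right one --- the ``pseudoinverse'' in the statement must be read as the group inverse of $I-Q$ (well defined by ergodicity), since the Moore--Penrose projector is orthogonal rather than the oblique projector $I-\mathbf{1}\pi^{T}$ and the collapse of the left-hand side would fail. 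The paper itself supplies no proof, quoting the lemma from \cite{FM86}, and your derivation faithfully reproduces the argument underlying that reference.
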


In the next subsections, apply the above lemma to bound the factor $%
\left\Vert v_{1}\left( R\right) -v_{1}\left( Q\right) \right\Vert _{\infty }$%
. In the first subsection we compute an explicit expression for the norm of
the perturbation $\left\Vert R-Q\right\Vert _{\infty }$ as a function of the
properties of the nodes. In the second subsection, we study the coefficient $%
\kappa _{s}$ in terms of the eigenvalues of $\mathbb{E}W_{k}$.

\bigskip

\subsection{Infinity Norm of the Perturbation $\left\Vert R-Q\right\Vert
_{\infty }$}

Our approach is based on studying the entries of the $n^{2}\times n^{2}$
matrix $R=\mathbb{E}\left[ W_{k}\otimes W_{k}\right] $, and compare them
with the entries of $Q=\mathbb{E}W_{k}\otimes \mathbb{E}W_{k}$. The entries
of $Q$ and $R$ are of the form $\mathbb{E}(w_{ij})\mathbb{E}(w_{rs})$ and $%
\mathbb{E}(w_{ij}w_{rs})$, respectively, with $i,j,r,$ and $s$ ranging from $%
1$ to $n$. These entries can be classified into seven different cases
depending on the relations between the indices. In the Appendices II\ and
III, we present explicit expressions for each one of these cases. A key
observation in our approach is to notice the pattern that the above entries
induces in the matrices $R$ and $Q$. For sake of clarity, we illustrate this
pattern for $n=3$ in Fig. 1, where the numbers in parenthesis correspond to
each one of the seven cases identified in the Appendices.

\begin{figure}[tbp]
\centering
\includegraphics[width=0.80\linewidth]{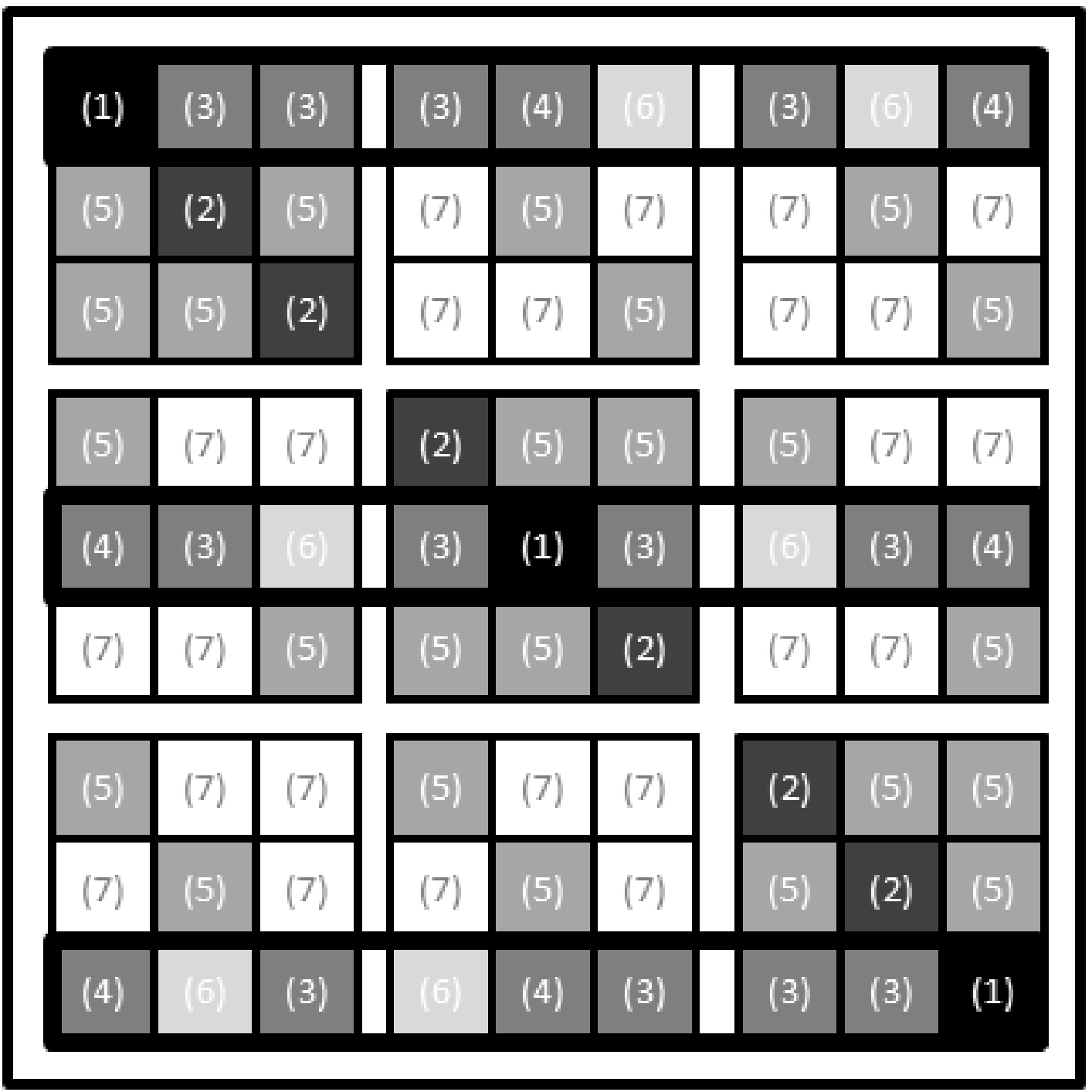} 
\caption{The pattern of $\mathbb{E}[W_k\otimes W_k]$ for $n=3$. The numbers
in parentheses represent the value of each entry defined in the Appendices.}
\end{figure}

Since the entries of $R$ follow the same pattern as the entries of $Q$
(although these entries are different), the perturbation matrix $\Delta =R-Q$
also follows the same pattern as $R$ and $Q$. Hence, comparing the entries
of $Q$ and $R$ we can easily deduce the following seven cases for the
entries of $\Delta $:%
\begin{align*}
\Delta _{2}& =\Delta _{5}=\Delta _{7}=0, \\
\Delta _{1}& =M_{2}\left( z_{i}\right) -M_{1}^{2}\left( z_{i}\right) , \\
\Delta _{3}& =\frac{a_{ji}}{d_{i}}\left( -M_{2}\left( z_{i}\right)
+M_{1}^{2}\left( z_{i}\right) \right) , \\
\Delta _{4}& =a_{ji}\left( \frac{M_{1}\left( z_{i}\right) -M_{2}\left(
z_{i}\right) }{d_{i}}-\frac{\left( 1-M_{1}\left( z_{i}\right) \right) ^{2}}{%
d_{i}^{2}}\right) , \\
\Delta _{6}& =\left\{ 
\begin{array}{ll}
\left( \frac{\left( 1-M_{i}\right) ^{2}}{d_{i}^{2}}-\frac{1+2V_{i}-3M_{i}}{%
d_{i}\left( d_{i}-1\right) }\right) , & \text{for }d_{i}>1, \\ 
-a_{ji}a_{si}\left( 1-M_{1}\left( z_{i}\right) \right) ^{2}, & \text{for }%
d_{i}=1.%
\end{array}%
\right.
\end{align*}%
where $M_{i}^{\left( 2\right) }\triangleq \mathbb{E}\left[ 1/(\tilde{d}%
_{i}+1)^{2}\right] $ can be written as a hypergeometric function that
depends on $p_{i}$ and $d_{i}$ (see Appendix III). From the above entries,
we can compute via simple, but tedious, algebraic manipulations the
following expression for the infinity norm of the perturbation:%
\begin{equation}
\left\Vert R-Q\right\Vert _{\infty }=\left\Vert \Delta \right\Vert _{\infty
}=\max_{1\leq i\leq n}\left\{ S_{i}\right\} ,  \label{Inf Norm Bound}
\end{equation}%
where%
\begin{equation*}
S_{i}=2(1-M_{i}^{\left( 1\right) })\left[ M_{i}^{\left( 1\right) }+\frac{1}{%
d_{i}}\left( M_{i}^{\left( 1\right) }-1\right) \right] ,
\end{equation*}%
Note that $S_{i}$ is a function that depends exclusively on the sequence of
nodes properties $\left\{ \left( p_{u},d_{u}\right) \right\} _{u\in V}$;
hence, $\left\Vert R-Q\right\Vert _{\infty }$ depends exclusively on the set
of nodes degrees and probabilities, but it is independent on how these
nodes interconnect. In the following subsection, we show how the pattern of
interconnection among nodes influences the upper-bound of the variance in (%
\ref{variance bound}) via the condition number $\kappa _{s}$ in (\ref%
{Cho-Meyer Bound}).

\bigskip

\subsection{Perturbation-Based Bound for the Variance}

In this subsection we derive an explicit relationship between the condition
number $\kappa _{s}$ and the network structure via the spectral properties
of $\mathbb{E}W_{k}=\Sigma +\left( I-\Sigma \right) D_{c}^{-1}A_{c}^{T}$.
This result will then be used to bound the variance of $x^{\ast }$ in (\ref%
{variance bound}). We base our analysis on a the following bound, derived by
Meyer in \cite{Meyer94}, relating $\kappa _{s}$ with the eigenvalues of $Q$,
denoted as $\left\{ \mu _{i}\right\} _{i=1}^{n^{2}}$:%
\begin{equation}
\max_{i,j}\left\vert g_{ij}^{\dag }\right\vert <\frac{2\left( n^{2}-1\right) 
}{\prod_{k=2}^{n^{2}}\left( 1-\mu _{k}\right) }.
\label{General Spectral Bound}
\end{equation}

Before we present our main result, we need some notation. Denote by $\left\{
\lambda _{i}\right\} _{i=1}^{n}$ and $\left\{ \mu _{j}\right\}
_{j=1}^{n^{2}} $ the set of eigenvalues of $\mathbb{E}W_{k}$ and $\mathbb{E}%
W_{k}\otimes \mathbb{E}W_{k}$, respectively. The ordering of the eigenvalue
sequences is determined by their distance to $1$, i.e., $\left\vert
1-\lambda _{i}\right\vert \leq \left\vert 1-\lambda _{j}\right\vert $ for $%
i\leq j$. Hence, our result can be stated as follows:

\begin{theorem}
\label{Bound Asymptotic Variance}The variance of the asymptotic consensus
value of (\ref{consensus_update}) can be upper-bounded by%
\begin{equation}
var\left( x^{\ast }\right) \leq \underset{(A)}{\underbrace{\left\Vert 
\mathbf{x}(0)\otimes \mathbf{x}(0)\right\Vert _{1}}}\underset{(B)}{%
\underbrace{\left( \max_{1\leq i\leq n}\left\{ S_{i}\right\} \right) }}%
\underset{(C)}{\underbrace{\left( \frac{2\left( n^{2}-1\right) }{\prod
\left( 1-\lambda _{i}\lambda _{j}\right) }\right) }},
\label{Separation of terms}
\end{equation}%
where $\left\{ \lambda _{i}\right\} _{i=1}^{n}$ are the eigenvalues of $%
\mathbb{E}W_{k},$ and the product $\prod \left( 1-\lambda _{i}\lambda
_{j}\right) =\prod_{\left( i,j\right) \text{ s.t. }\left( i,j\right) \neq
\left( 1,1\right) }\left( 1-\lambda _{i}\lambda _{j}\right) .$
\end{theorem}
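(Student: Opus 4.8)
The plan is to chain together the three ingredients already assembled in this section: the variance bound (\ref{variance bound}), the stationary-distribution sensitivity Lemma \ref{Cho-Meyer}, and Meyer's spectral bound (\ref{General Spectral Bound}). Factor $(A)$ requires no further work, since it is precisely the first factor appearing in (\ref{variance bound}) and is written out in terms of the initial conditions in (\ref{Initial Conditions Term}). The whole argument therefore reduces to bounding the second factor $\left\Vert v_{1}(R)-v_{1}(Q)\right\Vert _{\infty }$ by the product of $(B)$ and $(C)$.

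First I would apply Lemma \ref{Cho-Meyer} under the identifications $\tilde{\pi}=v_{1}(R)$ and $\pi =v_{1}(Q)$, which are justified because $R$ and $Q$ are $n^{2}\times n^{2}$ stochastic matrices whose dominant left eigenvectors coincide with the stationary distributions of the associated Markov chains. This yields $\left\Vert v_{1}(R)-v_{1}(Q)\right\Vert _{\infty }\leq \kappa _{s}\left\Vert R-Q\right\Vert _{\infty }$, and the perturbation norm is then replaced directly by the expression (\ref{Inf Norm Bound}), producing factor $(B)=\max _{i}\{S_{i}\}$.

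Next I would bound the condition number $\kappa _{s}=\max _{i,j}\left\vert g_{ij}^{\dag }\right\vert$ via Meyer's inequality (\ref{General Spectral Bound}), which involves the eigenvalues $\{\mu _{k}\}_{k=1}^{n^{2}}$ of $Q=\mathbb{E}W_{k}\otimes \mathbb{E}W_{k}$. The key step is the Kronecker spectral identity: since $\mathbb{E}W_{k}$ has eigenvalues $\{\lambda _{i}\}_{i=1}^{n}$, the matrix $Q$ has eigenvalues $\{\lambda _{i}\lambda _{j}:1\leq i,j\leq n\}$. Because $\mathbb{E}W_{k}$ is stochastic, with the self-loops introduced in (\ref{Stochastic Matrix}) rendering the expected chain aperiodic, $\lambda _{1}=1$ is its unique unit-modulus eigenvalue; under the ordering convention the Perron root of $Q$ is therefore $\mu _{1}=\lambda _{1}\lambda _{1}=1$, attained only at the index pair $(1,1)$. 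Excluding this root, the denominator product in (\ref{General Spectral Bound}) collapses to $\prod _{(i,j)\neq (1,1)}(1-\lambda _{i}\lambda _{j})$, which is exactly factor $(C)$.

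Assembling factor $(A)$ from (\ref{variance bound}), factor $(B)$ from Lemma \ref{Cho-Meyer} together with (\ref{Inf Norm Bound}), and factor $(C)$ from (\ref{General Spectral Bound}) with the Kronecker identity yields the claimed bound (\ref{Separation of terms}). The main obstacle is organizational rather than analytical: one must verify that the single eigenvalue excluded from the product is precisely the simple Perron root $\mu _{1}=1$ of $Q$, i.e. that no pair other than $(1,1)$ can give $\lambda _{i}\lambda _{j}=1$. This is where aperiodicity and simplicity of $\lambda_1=1$ are essential, since they rule out any other unit-modulus factors; once this correspondence is pinned down, the remainder is the mechanical substitution of the previously derived expressions.
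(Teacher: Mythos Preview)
Your proposal is correct and follows essentially the same chain of inequalities as the paper's proof: start from (\ref{variance bound}), invoke Lemma \ref{Cho-Meyer} to split off $\kappa_s\|R-Q\|_\infty$, substitute (\ref{Inf Norm Bound}) for the perturbation norm, then apply Meyer's bound (\ref{General Spectral Bound}) and the Kronecker spectral identity $\{\mu_k\}=\{\lambda_i\lambda_j\}$. Your discussion of why only the pair $(1,1)$ yields $\lambda_i\lambda_j=1$ is in fact more careful than the paper, which simply asserts the identity and moves on.
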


\bigskip

\begin{proof}
We start our proof from (\ref{variance bound})%
\begin{eqnarray*}
var\left( x^{\ast }\right)  &\leq &\left\Vert v_{1}\left( R\right)
-v_{1}\left( Q\right) \right\Vert _{\infty }\left\Vert \mathbf{x}(0)\otimes 
\mathbf{x}(0)\right\Vert _{1} \\
&&\overset{(a)}{\leq }\kappa _{s}\left\Vert R-Q\right\Vert _{\infty
}\left\Vert \mathbf{x}(0)\otimes \mathbf{x}(0)\right\Vert _{1} \\
&&\overset{(b)}{=}\kappa _{s}\left( \max_{1\leq i\leq n}\left\{
S_{i}\right\} \right) \left\Vert \mathbf{x}(0)\otimes \mathbf{x}%
(0)\right\Vert _{1} \\
&&\overset{(c)}{<}\frac{2\left( n^{2}-1\right) }{\prod \left( 1-\mu
_{k}\right) }\max_{1\leq i\leq n}\left\{ S_{i}\right\} \left\Vert \mathbf{x}%
(0)\otimes \mathbf{x}(0)\right\Vert _{1},
\end{eqnarray*}%
where we have used Lemma \ref{Cho-Meyer} in inequality (a), the expression
for $\left\Vert R-Q\right\Vert _{\infty }$ in (\ref{Inf Norm Bound}) in
equality (b), and the upper bound in (\ref{General Spectral Bound}) in step
(c). We obtain the statement of the theorem by applying the following
standard property of the Kronecker product: $\left\{ \mu _{k}\right\}
_{k=1}^{n^{2}}=\left\{ \lambda _{i}\lambda _{j}\right\} _{1\leq i,j<n}$.
\end{proof}

\bigskip

\begin{remark}
The bound in (\ref{Separation of terms}) separates the variance into three
multiplicative terms representing each one of the following elements (for
convenience, we have underlined each one of these terms in (\ref{Separation
of terms})):

\begin{description}
\item[(A)] This first term exclusively depends on the initial condition as
indicated by (\ref{Initial Conditions Term}).

\item[(B)] The second term depends solely on the nodes properties $d_{i}$
and $p_{i}$.

\item[(C)] The last term represents the influence from the overall graph
structure via the spectral properties of $\mathbb{E}W_{k}$.
\end{description}
\end{remark}

\bigskip

\begin{remark}
\label{Spectral Implications}It is specially interesting to study the
implications of Term (C) in the asymptotic variance. For example, given the
sequences of degrees and probabilities, $\left\{ d_{i}\right\} _{i=1}^{n}$
and $\left\{ p_{i}\right\} _{i=1}^{n}$, the influence of the network
structure on the variance is given via Term (C). Since the eigenvalues of $%
\mathbb{E}W_{k}$ are key in this term, it is interesting to briefly describe
the homogeneous Markov chain with transition matrix\ $P\triangleq \mathbb{E}%
W_{k}=\Sigma +\left( I-\Sigma \right) D_{c}^{-1}A_{c}^{T}$. This Markov
chain presents a self-loop in each one of its $n$ states with transition
probability $P_{ii}=M_{i}^{\left( 1\right) }$, as well as a link from $i$ to 
$j$ with transition probability $P_{ij}=\left( 1-M_{i}^{\left( 1\right)
}\right) d_{i}^{-1}$ for $\left( i,j\right) \in E_{c}$. From an analysis
proposed by Meyer in \cite{Meyer94}, we conclude that Term (C) is primarily
governed by how close the subdominant eigenvalues of $\mathbb{E}W_{k}$ are
from $1$. In particular, the further the subdominant eigenvalues of $\mathbb{%
E}W_{k}$ are from 1, the lower the upper bound in (\ref{Separation of terms}%
). In the next section, we illustrate the relationship between spectral
properties of $\mathbb{E}W_{k}$\ and the variance of the asymptotic
consensus value with several numerical simulations.
\end{remark}

\bigskip

\section{Numerical Simulations}

In this section we present several numerical simulations illustrating our
results. In the first subsection, we numerically verify the result in
Theorem \ref{Asymptotic Mean}. In the second subsection we present some
examples to illustrate the implications of Theorem \ref{Bound Asymptotic
Variance} in the variance of the asymptotic consensus value.

\subsection{Expectation of the Asymptotic Consensus}

In our first simulation we take a graph $G_{c}$ composed by 3 stars
connected in a chain (see Fig. 2). This graph is intended to represent, in a
social context, three leaders with a set of followers. In Fig. 2, we
represent the leaders using large circles marked as 1,2, and 3, and the
followers using smaller circles. We assume that each follower only listen to
one leader (the center of a star) and nobody else. In this particular
example, the first, second and third leaders have $4$, $8$, and $16$
followers, respectively. In each time step, a directed random graph $%
\mathcal{G}_{k}$ is built by choosing a set of directed communication links
from $G_{c}$. We have fixed the probability of existence of a directed link
coming into followers to be equal to $p_{lf}=1/2$, for all such links (see
Fig. 2). Also, the probability of existence of a directed link coming into a
leader is inversely proportional to the degree of the leader. More
specifically, we have chosen $p_{uv}=1/d_{v}$, for $v=1,2,$ and $3$, where $%
p_{uv}$ represents the existence of a directed communication link from the $u
$-th node to the $v$-th leader.

\begin{figure}[tbp]
\centering
\includegraphics[width=0.95\linewidth]{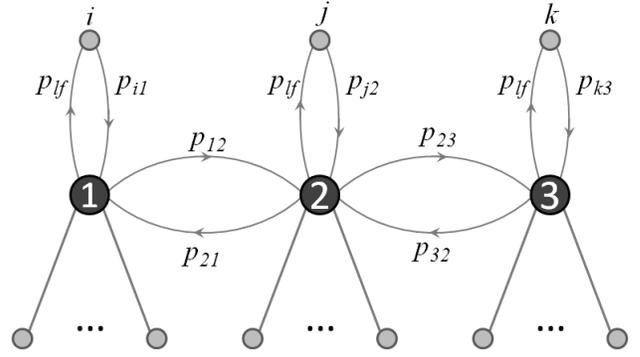} 
\caption{The above figure represents a chain of leader nodes, denoted as
1,2, and 3, with a set of followers. Connections between leaders are
represented as pairs of directed links with existence probabilities $p_{uv}$
and $p_{vu}$, with $u,v\in \left\{ 1,2,3\right\} $. We have also included in
this figure the connection of each leader and one of its followers (drawn as
a smaller circle on top of each leader). In this case, the directed link
incoming the follower (leader) exists with probability $p_{lf}$ ($p_{fl}$).
The rest of followers are represented under each leader, where we represent
each pair of directed links as a single undirected edge for clarity.}
\end{figure}

In this example, we compute the asymptotic consensus value for 100
realizations of a random consensus algorithm with initial conditions $%
x_{i}\left( 0\right) =i/N$. We represent the histogram of these realizations
in Fig. 3, where the empirical average of the 100 realizations equals
0.5077. The theoretical expectation for the asymptotic consensus value
applying Theorem \ref{Asymptotic Mean} equals $\mathbb{E}\left( x^{\ast
}\right) =\sum_{i}\rho w_{i}\left( i/N\right) =$0.4595, which is in great
accordance with the empirical value. For future reference, we have also
computed the empirical standard deviation of the 100 realizations to be
0.0298, and the three eigenvalues of $\mathbb{E}W_{k}$ closest to 1 are $%
\left\{ 0.9823,0.9449,0.75\right\} $.

\begin{figure}[tbp]
\centering
\includegraphics[width=0.85\linewidth]{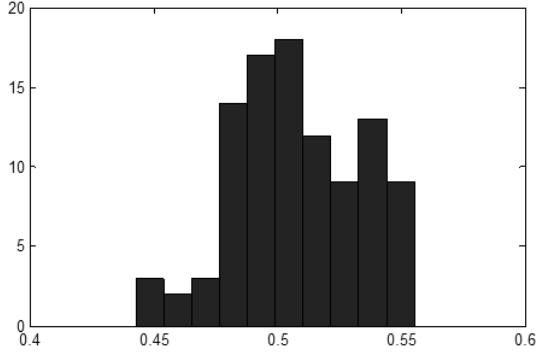} 
\caption{Empirical histogram for 100 realizations of the random directed
consensus over the graph in Fig. 2. The empirical average and the empirical
standard deviation of the realizations are 0.4595 and 0.0298, respectively.}
\end{figure}

\subsection{Expectation of the Asymptotic Variance}

In this subsection, we numerically illustrate some of the implications of
Theorem \ref{Bound Asymptotic Variance} in the variance of the asymptotic
consensus value. Although the upper bound stated in this Theorem is not
tight, there are some qualitative implications that are consistent
throughout our numerical experiments. For example, as we mentioned in Remark %
\ref{Spectral Implications}, given a set of initial conditions and nodes
properties, the upper bound in (\ref{Separation of terms}) is primarily
governed by how close the subdominant eigenvalues of $\mathbb{E}W_{k}$ are
from $1$. In particular, the further the subdominant eigenvalues of $\mathbb{%
E}W_{k}$ are from 1, the lower the upper bound. We illustrate in the
following numerical experiments that the lower the upper bound, the lower
one should expect the variance to be.

In the first experiment, we slightly modify the network described in the
previous subsection and study the influence of this modification on the
eigenvalues of $\mathbb{E}W_{k}$ and the variance of $x^{\ast }$. Our first
modification is a change in the probabilities of existence of a directed
link from the $u$-th node to the $v$-th leader without changing the network
topology. In particular, we choose the new probabilities to be $%
p_{uv}=3/d_{v}$, for $v=1,2,$ and $3$. In the modified network, the
eigenvalues of $\mathbb{E}W_{k}$ closest to 1 are $\left\{
0.9789,0.9372,0.75\right\} $. Hence, since the effect of our modification on
the eigenvalues is to move them away from 1, we should expect the variance
of $x^{\ast }$ to be reduced according to Remark \ref{Spectral Implications}%
. In fact, running 100 random consensus algorithms with the same initial
conditions, $x_{i}\left( 0\right) =i/N,$ using our new probabilities, we
obtain a standard deviation 0.0286 (which is less than the 0.0298 obtained
before).

In the second experiment, we illustrate how the larger the gap between the
eigenvalues of $\mathbb{E}W_{k}$ and $1$, the smaller the variance of $%
x^{\ast }$. In this case, apart from keeping the new set of probabilities $%
p_{uv}=3/d_{v},$ we convert the 3-chain of leaders into a 3-ring of leaders,
as depicted in Fig. 4. In this case, the three eigenvalues of $EW_{k}$
closest to $1$ are $\left\{ 0.9577,0.9212,0.75\right\} $, which are even
further away from 1 than in the second example. Hence, as expected, the
standard deviation of 100 random consensus algorithms is even smaller than
in the second example, in particular 0.0274. In conclusion, our simulations
are consistent with Theorem \ref{Bound Asymptotic Variance} and with the
qualitative behavior described in Remark \ref{Spectral Implications}.

\begin{figure}[tbp]
\centering
\includegraphics[width=0.95\linewidth]{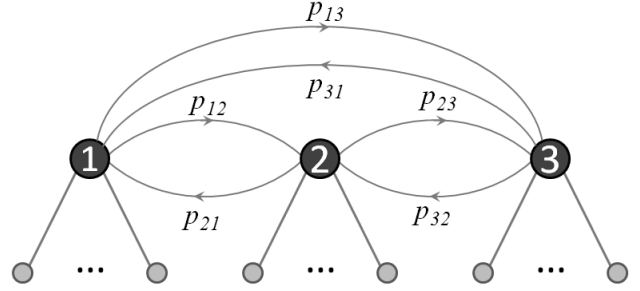} 
\caption{Ring of 3 leaders and corresponding followers.}
\end{figure}

\section{Conclusions and Future Work}

We have studied the asymptotic properties of the consensus value in
distributed consensus algorithms over switching, directed random graphs.
While different aspects of consensus algorithms over random switching
networks, such as conditions for convergence and the speed of convergence,
have been widely studied, a characterization of the distribution of the
asymptotic consensus for general \textit{asymmetric} random consensus
algorithms remains an open problem.

In this paper, we have derived closed-form expressions for the expectation
of the asymptotic consensus value as a function of the set of initial
conditions, $\left\{ x_{u}\left( 0\right) \right\} _{u\in V}$, and the set
of nodes properties, $\left\{ \left( p_{u},d_{u}\right) \right\} _{u\in V}$,
as stated in Theorem \ref{Asymptotic Mean}. We have also studied the
variance of the asymptotic consensus value in terms of several elements that
influence it, namely, (\emph{i}) the initial conditions, (\emph{ii}) nodes
properties, and (\emph{iii}) the network topology. In Theorem \ref{Bound
Asymptotic Variance}, we have derived an upper bound for the variance of the
asymptotic consensus value that explicitly describes the influence of each
one of these elements. We also provide an interpretation of the influce of
the network topology on the variance in terms of the eigenvalues of the
expected matrix $\mathbb{E}W_{k}$. From our analysis we conclude that, in
most cases, the variance of $x^{\ast }$ is primarily governed by how close
the subdominant eigenvalues of $\mathbb{E}W_{k}$ are from $1$. We have
checked the validity of our predictions with several numerical
simulations.\bigskip 

\appendix[Computing the Entries of $\mathbb{E}W_{k}$]

We start by computing the entries of $\mathbb{E}W_{k}$. The diagonal entries
of $\mathbb{E}W_{k}$ are given by: 
\begin{eqnarray*}
\mathbb{E}w_{ii} &=&\mathbb{E}\left[ \frac{1}{1+\tilde{d}_{i}}\right]
=\sum_{k=0}^{d_{i}}\frac{1}{k+1}\mathbb{P}(\tilde{d}_{i}=k) \\
&=&\sum_{k=0}^{d_{i}}\frac{1}{k+1}\binom{d_{i}}{k}%
p_{i}^{k}(1-p_{i})^{d_{i}-k} \\
&=&\frac{1-q_{i}^{d_{i}+1}}{p_{i}\left( d_{i}+1\right) }\triangleq
M_{i}^{\left( 1\right) }
\end{eqnarray*}%
Similarly, the non-diagonal entries of $\mathbb{E}W_{k}$ result in:%
\begin{eqnarray*}
\mathbb{E}w_{ij} &=&\mathbb{E}\left[ \frac{\tilde{a}_{ji}}{1+\tilde{d}_{i}}%
\right] \\
&=&a_{ji}\frac{q_{i}^{d_{i}+1}+p_{i}\left( d_{i}+1\right) -1}{p_{i}\left(
d_{i}+1\right) d_{i}} \\
&=&a_{ji}\frac{1-M_{i}^{\left( 1\right) }}{d_{i}}.
\end{eqnarray*}

\bigskip

\appendix[Entries of $\mathbb{E}W_{k} \otimes \mathbb{E}W_{k}$]

We now compute the possible entries in $Q=\mathbb{E}W_{k}\otimes \mathbb{E}%
W_{k}$. The entries of the Kronecker matrix $\mathbb{E}W_{k}\otimes \mathbb{E%
}W_{k}$, with $\mathbb{E}W_{k}=\Sigma +\left( I-\Sigma \right) D^{-1}A^{T}$,
present entries that can be classified into seven different cases depending
on the relations between the indices. These are the cases, where we assume
that all four indices $i,j,r,$ and $s$ are distinct:%
\begin{align*}
Q_{1}& \triangleq \mathbb{E}\left( w_{ii}\right) \mathbb{E}\left(
w_{ii}\right) =\left( M_{i}^{\left( 1\right) }\right) ^{2}, \\
Q_{2}& \triangleq \mathbb{E}\left( w_{ii}\right) \mathbb{E}\left(
w_{jj}\right) =M_{i}^{\left( 1\right) }M_{j}^{\left( 1\right) }, \\
Q_{3}& \triangleq \mathbb{E}\left( w_{ii}\right) \mathbb{E}\left(
w_{ij}\right) =\frac{a_{ji}}{d_{i}}M_{i}^{\left( 1\right) }\left(
1-M_{i}^{\left( 1\right) }\right) , \\
Q_{4}& \triangleq \mathbb{E}\left( w_{ij}\right) \mathbb{E}\left(
w_{ij}\right) =\frac{a_{ji}}{d_{i}^{2}}\left( 1-M_{i}^{\left( 1\right)
}\right) ^{2}, \\
Q_{5}& \triangleq \mathbb{E}\left( w_{ii}\right) \mathbb{E}\left(
w_{rs}\right) =\frac{a_{sr}}{d_{r}}~M_{i}^{\left( 1\right) }\left( \text{~}%
1-M_{r}^{\left( 1\right) }\right) , \\
Q_{6}& \triangleq \mathbb{E}\left( w_{ij}\right) \mathbb{E}\left(
w_{is}\right) =\frac{a_{ji}a_{si}}{d_{i}^{2}}\left( 1-M_{i}^{\left( 1\right)
}\right) ^{2}, \\
Q_{7}& \triangleq \mathbb{E}\left( w_{ij}\right) \mathbb{E}\left(
w_{rs}\right) =\frac{a_{ji}a_{sr}~}{d_{i}d_{r}}\left( 1-M_{i}^{\left(
1\right) }\right) \left( ~1-M_{r}^{\left( 1\right) }\right) .
\end{align*}

\bigskip

\appendix[Entries of $\mathbb{E}W_{k} \otimes W_{k}$]

We now turn to the computation of the elements of $\mathbb{E}[W_{k}\otimes
W_{k}]$, which are of the form $\mathbb{E}(w_{ij}w_{rs})$. Again, we
classify the entries into seven different cases depending on the relations
between the subindices:

\begin{eqnarray*}
R_{1} &\triangleq &\mathbb{E}w_{ii}^{2}=\mathbb{E}\left[ \frac{1}{(\tilde{d}%
_{i}+1)^{2}}\right] \\
&=&\sum_{k=0}^{d_{i}}\frac{1}{(k+1)^{2}}\binom{d_{i}}{k}%
p_{i}^{k}q_{i}^{d_{i}-k} \\
&=&q^{d_{i}}H(p_{i},d_{i})\triangleq M_{i}^{\left( 2\right) }
\end{eqnarray*}%
Similarly, we also have:%
\begin{eqnarray*}
R_{2} &\triangleq &\mathbb{E}\left( w_{ii}w_{jj}\right) =M_{i}^{\left(
1\right) }M_{j}^{\left( 1\right) }, \\
R_{3} &\triangleq &\mathbb{E}\left( w_{ii}w_{ij}\right) =\frac{a_{ji}}{d_{i}}%
\left( M_{i}^{\left( 1\right) }-M_{i}^{\left( 2\right) }\right) , \\
R_{4} &\triangleq &\mathbb{E}\left( w_{ij}w_{ij}\right) =\frac{a_{ji}}{d_{i}}%
\left( M_{i}^{\left( 1\right) }-M_{i}^{\left( 2\right) }\right) , \\
R_{5} &\triangleq &\mathbb{E}\left( w_{ii}w_{rs}\right) =\frac{a_{sr}}{d_{r}}%
~M_{i}^{\left( 1\right) }\left( \text{~}1-M_{r}^{\left( 1\right) }\right) ,
\\
R_{6} &\triangleq &\mathbb{E}\left( w_{ij}w_{is}\right) \\
&=&\left\{ 
\begin{tabular}{ll}
$\frac{a_{ji}a_{si}}{d_{i}\left( d_{i}-1\right) }\left( 1+2M_{i}^{\left(
2\right) }-3M_{i}^{\left( 1\right) }\right) ,$ & for $d_{i}>1,$ \\ 
$0,$ & for $d_{i}=1,$%
\end{tabular}%
\right. \\
R_{7} &\triangleq &\mathbb{E}(w_{ij}w_{ji})=\mathbb{E}(w_{ij}w_{js})=\mathbb{%
E}(w_{ij}w_{ri}) \\
&=&\mathbb{E}(w_{ij}w_{rj})=\mathbb{E}(w_{ij}w_{rs}) \\
&=&\frac{a_{ji}a_{sr}~}{d_{i}d_{r}}\left( 1-M_{i}^{\left( 1\right) }\right)
\left( ~1-M_{r}^{\left( 1\right) }\right) .
\end{eqnarray*}




\begin{thebibliography}{99}
\bibitem{TsitsiklisThesis} J. N. Tsitsiklis, \emph{Problems in decentralized
decision making and computation}, Ph.D. dissertation, Massachusetts
Institute of Technology, Cambridge, MA, 1984.

\bibitem{Jad2003} A. Jadbabaie, J. Lin, and A. S. Morse, \textquotedblleft
Coordination of groups of mobile autonomous agents using nearest neighbor
rules," \emph{IEEE Transactions on Automatic Control}, vol. 48, no. 6, pp.
988-1001, 2003.

\bibitem{Olfati2007} R. Olfati-Saber, J. A. Fax, and R. M. Murray,
\textquotedblleft Consensus and cooperation in networked multi-agent
systems," \emph{Proceedings of the IEEE}, vol. 95, no. 1, 2007.

\bibitem{Bullo2005} J. Cortes, S. Martinez, and F. Bullo, \textquotedblleft
Analysis and design tools for distributed motion coordination," in \emph{%
Proceedings of the American Control Conference}, Portland, OR, pp.
1680-1685, 2005.

\bibitem{DeGroot1974} M. H. DeGroot, \textquotedblleft Reaching a
Consensus," \emph{Journal of American Statistical Association}, vol. 69, no.
345, pp. 118-121, 1974.

\bibitem{Jackson_Naive} B. Golub and M. O. Jackson, \textquotedblleft Naive
learning in social networks: Convergence, influence, and the wisdom of
crowds," 2007, unpublished Manuscript.

\bibitem{Hatano2004} Y. Hatano and M. Mesbahi, \textquotedblleft Agreement
over random networks," \emph{IEEE Transactions on Automatic Control}, vol.
50, no. 11, pp. 1867-1872, 2005.

\bibitem{WuRandomConsensus} C. W. Wu, \textquotedblleft Synchronization and
convergence of linear dynamics in random directed networks," \emph{IEEE
Transactions on Automatic Control}, vol. 51, no. 7, pp. 1207-1210, 2006.

\bibitem{Porfiri_TAC} M. Porfiri and D. J. Stilwell, \textquotedblleft
Consensus seeking over random weighted directed graphs," \emph{IEEE
Transactions on Automatic Control}, vol. 52, no. 9, pp. 1767-1773, 2007.

\bibitem{Tahbaz_Jad_TAC} A. Tahbaz-Salehi and A. Jadbabaie,
\textquotedblleft A necessary and sufficient condition for consensus over
random networks," \emph{IEEE Transactions on Automatic Control}, vol. 53,
no. 3, pp. 791-795, 2008.

\bibitem{Picci_Taylor_CDC} G. Picci and T. Taylor, \textquotedblleft Almost
sure convergence of random gossip algorithms," in \emph{Proceedings of the
46th IEEE Conference on Decision and Control}, New Orleans, LA, pp. 282-287,
2007.

\bibitem{Asu_misinformation} D. Acemoglu, A. Ozdaglar, and A.
ParandehGheibi, \textquotedblleft Spread of (mis)information in social
networks," \emph{Massachusetts Institute of Technology, LIDS Report 2812},
May 2009, submitted for Publication.

\bibitem{Zampieri} F. Fagnani and S. Zampieri, \textquotedblleft Randomized
consensus algorithms over large scale networks," \emph{IEEE Journal on
Selected Areas in Communications}, vol. 26, no. 4, pp. 634-649, 2008.

\bibitem{Boyd2005b} S. Boyd, A. Gosh, B. Prabhakar, and D. Shah,
\textquotedblleft Randomized gossip algorithms," \emph{Special issue of IEEE
Transactions on Information Theory and IEEE/ACM Transactions on Networking},
vol. 52, no. 6, pp. 2508-2530, 2006.

\bibitem{Tahbaz_Jad_TAC_2} A. Tahbaz-Salehi and A. Jadbabaie,
\textquotedblleft Consensus over ergodic stationary graph processes," \emph{%
IEEE Transactions on Automatic Control}, vol. 55, no. 1, pp. 225-230, 2010.

\bibitem{PTJ10} V.M. Preciado, A. Tahbaz-Salehi, and A. Jadbabaie,
\textquotedblleft Variance analysis of randomized consensus in switching
directed networks,"in \emph{Proceedings of the IEEE American Control
Conference}, 2010, accepted for Publication.

\bibitem{FM86} R.E. Funderlic and C.D. Meyer, \textquotedblleft Sensitivity
of the stationary distribution vector for an ergodic Markov chain," \emph{%
Linear Algebra and Applications}, vol. 76, pp. 1-17, 1986.

\bibitem{Meyer94} C.D. Meyer, \textquotedblleft Sensitivity of the
stationary distribution of a Markov chain," \emph{SIAM J. Matrix Appl.},
vol. 15, pp. 715-728, 1994.
\end{thebibliography}
\end{document}